\title{Coloring directed cycles}
\author{Andrzej Szepietowski \\
         Institute of Informatics, University of Gda\'nsk \\
        ul. Wita Stwosza 57, 80952 Gda\'nsk, Poland \\
                {\tt matszp{@}inf.univ.gda.pl}}
\date{}
\begin{document}
\maketitle
\newtheorem{df}{Definition}
\newtheorem{lem}[df]{Lemma}
\newtheorem{thm}[df]{Theorem}
\newtheorem{cor}[df]{Corollary}

\begin{abstract}
Sopena in his survey 
[E. Sopena, 
The oriented chromatic number of graphs: A short survey, 
preprint 
2013]
writes, without any proof, that an oriented cycle $\vec C$ can be colored with  three colors if and only if
$\lambda(\vec C)=0$, where  $\lambda(\vec C)$ is the number of forward arcs minus the number of backward arcs in $\vec C$.
This is not true. In this paper we show that
$\vec C$ can be colored with three colors if and only if 
$\lambda(\vec C)=0~(\bmod~3)$ or $\vec C$
does not contain three consecutive arcs going in the same direction.
\end{abstract}
Key words: Graph coloring, oriented graph coloring, oriented cycle. 

\section{Introduction}
Oriented coloring is a  coloring of the vertices of an  oriented graph
$G$ such that: (1) no two neighbors have the same color, (2)
for every two arcs $(t,u)$ and $(v,w)$, either $\beta(t)\ne \beta(w)$ or
$\beta(u)\ne \beta(v)$.
In other words, if there is an arc leading  from the color $\beta_1$ to
$\beta_2$, then no arc leads from $\beta_2$ to $\beta_1$.
The oriented chromatic number $\chi_o({G})$ of an oriented graph $G$  is the
smallest number $k$ of colors needed to color $G$.
It is easy to see that an oriented graph can be colored by $k$ colors if
and only if there exists an homomorphism from $G$ to an oriented
graph $H$ with $k$ vertices.

Oriented coloring has been studied  in [1--11], 
see~\cite{sop2} for a short survey of the main results.
An oriented cycle of length $n\geq 3$ is a sequence of vertices
$\vec C=v_0,v_1,\dots,v_{n-1}$, each vertex $v_i$, $0\le i\le n-1$, is connected with the next vertex by
an arc, either by $(v_i,v_{i+1})\in A(\vec{C})$ (forward arc) or by $(v_{i+1},v_i)\in A(\vec{C})$
(backward arc); $v_{n}$ denotes $v_0$. For each pair of consecutive vertices $v_i$ and $v_{i+1}$, we define
$\lambda(v_i,v_{i+1})=1$ if $(v_i,v_{i+1})\in A(\vec{C})$, and
$\lambda(v_i,v_{i+1})=-1$ if $(v_{i+1},v_i)\in A(\vec{C})$.
For the whole cycle
$$\lambda(\vec C)=\sum_{i=0}^{n-1}\lambda(v_i,v_{i+1}).$$
In other words $\lambda(\vec C)$ is the number of forward arcs minus the number of backward arcs in $\vec C$. In his paper~\cite{sop2} Sopena gives, without any proof, the following lemma:

\begin{lem}[\cite{sop2}] If $\vec C$ is an oriented cycle then:
\begin{enumerate}
\item $\chi_o({\vec C})=2$ if and only if the orientation of $\vec C$ is alternating.
\item $\chi_o({\vec C})=3$ if and only if $\lambda(\vec C)=0$ and the orientation of $\vec C$ is not alternating.
\item $\chi_o({\vec C})=4$ if and only if $\lambda(\vec C)\ne0$ and $\vec C$ is not the directed cycle 
$\vec C_5$ on 5 vertices (see Fig.~\ref{c5}).
\end{enumerate}
\end{lem}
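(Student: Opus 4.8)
I would work throughout with the homomorphism reformulation recalled above: $\vec C$ admits an oriented $k$-coloring exactly when there is a homomorphism from $\vec C$ into some oriented graph $H$ on $k$ vertices. Since the three parts are meant to classify cycles by the value of $\chi_o(\vec C)$, the whole lemma reduces to deciding, for each cycle, the least number of vertices of a target that receives a homomorphism from $\vec C$. The natural route is to enumerate the few oriented graphs on two and on three vertices and to characterize which cycles map into each.

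Part~(1) I would prove directly, and I expect it to hold as a genuine equivalence. The only oriented graph on two vertices carrying an arc is $a\to b$, and a homomorphism $\vec C\to(a\to b)$ is a proper $2$-coloring in which every arc of $\vec C$ runs from the $a$-class to the $b$-class. Following the cycle around, this forces the colors, and with them the arc directions, to alternate; conversely an alternating cycle maps onto $a\to b$. Hence $\chi_o(\vec C)=2$ precisely for alternating orientations.

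The content of the lemma lies in parts~(2) and~(3), and here I would rely on a dichotomy among the oriented graphs on three vertices: the directed triangle $\vec C_3=(0\to1\to2\to0)$ carries directed walks of every length, whereas every other oriented graph on three vertices is acyclic and hence has no directed walk of length~$3$. Consequently a cycle with three consecutive arcs in one direction --- a directed subpath of length~$3$ --- can be $3$-colored only through a homomorphism into $\vec C_3$. Such a homomorphism I would analyze by the running sum $\beta(v_{i+1})=\beta(v_i)+\lambda(v_i,v_{i+1})$ computed in $\{0,1,2\}$; it closes up consistently on returning to $v_0$ if and only if $\lambda(\vec C)\equiv 0~(\bmod~3)$. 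This computation is the crux of the whole argument and, I expect, the main obstacle to establishing part~(2) as written, because the condition it produces is $\lambda(\vec C)\equiv 0~(\bmod~3)$, strictly weaker than the asserted $\lambda(\vec C)=0$.

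Pushing the plan to its end, I would therefore expect the ``only if'' halves of parts~(2) and~(3) to break at exactly this point. A cycle with $\lambda(\vec C)=3$, say three forward arcs completed by an alternating tail, maps to $\vec C_3$ and is $3$-colorable although $\lambda(\vec C)\ne 0$, while a cycle with no three consecutive equal-direction arcs maps to a non-$\vec C_3$ target whatever $\lambda(\vec C)$ is; so $\lambda(\vec C)$ alone cannot be the deciding quantity. The equivalence the argument actually supports is that $\chi_o(\vec C)\le 3$ holds if and only if $\lambda(\vec C)\equiv 0~(\bmod~3)$ or $\vec C$ has no three consecutive arcs in one direction, every remaining cycle needing at least four colors, and $\vec C_5$ --- which maps into no oriented graph on fewer than five vertices, because a closed directed walk of length $5$ cannot be assembled from directed cycles of lengths $3$ and $4$ alone --- needing five. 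The genuinely hard remaining step is this sharp ``only if'': excluding every $3$-coloring of a cycle that meets neither condition, which is exactly what the walk-length dichotomy is built to do.
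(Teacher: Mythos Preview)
Your analysis is on target, and in fact it aligns with the paper's own stance: the paper does \emph{not} prove this lemma. It quotes the lemma from Sopena's survey precisely in order to declare parts~(2) and~(3) incorrect, and then replaces it with the corrected characterization you yourself arrive at --- namely, $\chi_o(\vec C)\le 3$ if and only if $\lambda(\vec C)\equiv 0~(\bmod~3)$ or $\vec C$ contains no three consecutive arcs in the same direction. So there is no paper proof to compare against; your proposal is a refutation, and it matches the paper's refutation.

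Your method is also essentially the paper's. You split the $3$-vertex targets into the directed triangle $\vec C_3$ and the acyclic ones, and you obtain the $\bmod~3$ condition for $\vec C_3$ via the running-sum homomorphism; the paper does exactly this in its Lemma~3. For the remaining case the paper singles out the one transitive tournament $\vec V_3$ on three vertices and gives an explicit homomorphism $h(v)=\mathrm{indegree}(v)$ whenever $\vec C$ has no three consecutive equally-oriented arcs (its Lemma~4); you assert this direction (``a cycle with no three consecutive equal-direction arcs maps to a non-$\vec C_3$ target'') but do not supply the map, so that is the one step you would still need to fill in. Your treatment of $\vec C_5$ via the Frobenius-type obstruction $5\ne 3a+4b$ is a nice touch the paper does not spell out.
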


\begin{figure}[htb]
\centering
\includegraphics[scale=0.7]{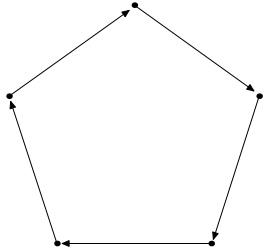}
\caption{Oriented cycle $\vec C_5$}
\label{c5}
\end{figure}

It is easy to see that 1. is correct.
In the cycle ${\vec C}_5$ with all five arcs going in one direction (see Fig.~\ref{c5}), every two vertices can be linked by 
a directed path  of length 1 or 2, therefore $\chi_o({\vec C}_5)=5$.
It is also true that every directed cycle $\vec C$ except
${\vec C}_5$ can be colored with four colors. We give a proof
of this in Section~\ref{s4}. But the characterization of cycles
colored with three colors given in Lemma~1 is not correct. In this paper we shall prove the following:
\begin{lem} If $\vec C$ is an oriented cycle then:
\begin{enumerate}
\item $\chi_o({\vec C})=2$ if and only if the orientation of $\vec C$ is alternating.
\item $\chi_o({\vec C})=3$ if and only if $\lambda(\vec C)=0~(\bmod~3)$ or $\vec C$ does not contain three consecutive arcs going in the same direction, and the orientation of $\vec C$ is not alternating.
\item $\chi_o({\vec C})=5$ if and only if $\vec C = \vec C_5$.
\item $\chi_o({\vec C})=4$ otherwise. 
\end{enumerate}
\end{lem}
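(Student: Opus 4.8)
The plan is to translate the $3$-colourability question into a homomorphism question with only two relevant targets. First I would record, using the homomorphism characterisation of oriented colouring noted in the introduction, that $\chi_o(\vec C)\le 3$ if and only if $\vec C$ admits a homomorphism to the directed triangle $\vec C_3$ (vertices $0,1,2$, arcs $0\to 1\to 2\to 0$) or to the transitive tournament $TT_3$ (arcs $0\to 1$, $1\to 2$, $0\to 2$). Indeed, a $3$-colouring $\beta$ of $\vec C$ maps $\vec C$ onto the oriented graph $H$ whose vertices are the colours used and whose arcs are the pairs $(\beta(u),\beta(v))$ with $(u,v)\in A(\vec C)$; condition~(1) makes $H$ loopless and condition~(2) makes it digon-free, so $H$ is an oriented graph on at most three vertices, and every such graph is a subgraph of $\vec C_3$ or of $TT_3$ (complete the missing arcs to a tournament; these are the only two $3$-vertex tournaments). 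Composing with the inclusion gives a homomorphism $\vec C\to\vec C_3$ or $\vec C\to TT_3$; conversely any homomorphism into a loopless, digon-free graph on $\le 3$ vertices is a valid oriented $3$-colouring. Item~1 is the remaining low end: $\chi_o(\vec C)=1$ is impossible, and $\chi_o(\vec C)=2$ means $\vec C$ maps to the single-arc graph on two vertices, which forces the colours to read $0,1,0,1,\dots$ and the arcs to alternate in direction, that is, $\vec C$ is alternating.

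Next I would characterise the two targets. A labelling $c$ of the vertices is a homomorphism to $\vec C_3$ precisely when $c(v_{i+1})\equiv c(v_i)+\lambda(v_i,v_{i+1})\pmod{3}$ for every $i$; such a $c$ can be defined by accumulating the values $\lambda(v_j,v_{j+1})$ around the cycle, and it closes up consistently if and only if $\lambda(\vec C)\equiv 0\pmod{3}$. A homomorphism to $TT_3$ is the same thing as a labelling by $\{0,1,2\}$ in which the tail of every arc gets a strictly smaller label than the head; three consecutive arcs in one direction would force four strictly monotone labels, which is impossible, so the existence of such a labelling implies that $\vec C$ has no three consecutive arcs in the same direction. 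For the converse I would use the block structure: if no three consecutive arcs agree in direction, then the maximal runs of equally directed arcs all have length $1$ or $2$, and hence every vertex is a source (out-degree $2$), a sink (in-degree $2$), or the middle vertex of a length-$2$ block (in-degree and out-degree $1$). Label sources $0$, sinks $2$, and middle vertices $1$. This is well defined (the three vertex types are mutually exclusive and exhaustive), and it is a homomorphism to $TT_3$: on any arc $u\to v$ the head $v$ is not a source and the tail $u$ is not a sink, and $u,v$ are never both middles (if $u$ is the middle of a block then $v$ is an endpoint of that block), so the arc reads $0\to 1$, $0\to 2$, or $1\to 2$. Putting this together, $\chi_o(\vec C)\le 3$ if and only if $\lambda(\vec C)\equiv 0\pmod{3}$ or $\vec C$ has no three consecutive arcs in one direction; intersecting with the complement of item~1 gives item~2.

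Items~3 and~4 are then immediate from the introduction together with the Section~\ref{s4} bound. Since $\chi_o(\vec C_5)=5$ and every oriented cycle other than $\vec C_5$ satisfies $\chi_o\le 4$, we get $\chi_o(\vec C)=5$ exactly when $\vec C=\vec C_5$, which is item~3. For item~4, suppose $\vec C$ is not alternating, is not $\vec C_5$, and fails the item~2 condition, so $\lambda(\vec C)\not\equiv 0\pmod{3}$ and $\vec C$ does contain three consecutive equally directed arcs. Then $\chi_o(\vec C)\ge 3$ by item~1 and $\chi_o(\vec C)\ne 3$ by item~2, hence $\chi_o(\vec C)\ge 4$; and $\chi_o(\vec C)\le 4$ by Section~\ref{s4} since $\vec C\ne\vec C_5$; therefore $\chi_o(\vec C)=4$.

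The step I expect to carry the real content is the $TT_3$ direction of item~2: seeing that ``no three consecutive equal arcs'' is exactly what caps the block lengths at $2$, that this forces every vertex into one of the three types, and that the source/sink/middle labelling is then (essentially forced and) a genuine homomorphism; one must also check this goes through for the small and degenerate cycles (lengths $3$ and $4$, and cycles with only one change of direction). Setting up the reduction to the two tournaments $\vec C_3$ and $TT_3$ cleanly is the other conceptual point, as it is precisely what exhibits ``$\lambda(\vec C)\equiv 0\pmod{3}$'' and ``three consecutive arcs in one direction'' as the two independent obstructions to a $3$-colouring, which is the correction to Lemma~1.
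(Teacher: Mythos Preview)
Your proposal is correct and follows essentially the same route as the paper: reduce $3$-colourability to homomorphisms into the two $3$-vertex tournaments (the paper's $\vec C_3$ and $\vec V_3$, your $\vec C_3$ and $TT_3$), characterise each target separately, and then read off items~3 and~4 from the Section~\ref{s4} bound together with the $\vec C_5$ observation from the introduction. Your source/sink/middle labelling for the $TT_3$ direction is exactly the paper's map $h(v)=\mathrm{indegree}(v)$, described in different words.
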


In Section~\ref{s3} we describe oriented cycles colored with three colors
and in Section ~\ref{s4} we shall prove that every directed cycle $\vec C$ except
${\vec C}_5$ can be colored with four colors.

\section{Coloring with three colors}\label{s3}
For every coloring oriented graph $G$ with three vertices, there is a homomorphism from $G$
either into the cycle ${\vec C}_3$, see Fig.~\ref{c3}, or to the cycle ${\vec V}_3$, see Fig.~\ref{v3}. Thus
if a cycle $\vec C$ is colored with three colors, then there is a homomorphism from $\vec C$ 
into ${\vec C}_3$ or ${\vec V}_3$.

\begin{figure}[htb]
\centering
\includegraphics[scale=0.7]{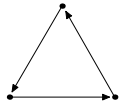}
\caption{Oriented cycle $\vec C_3$}
\label{c3}
\end{figure}
\begin{figure}[htb]
\centering
\includegraphics[scale=0.7]{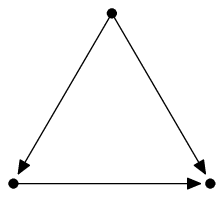}
\caption{Oriented cycle $\vec V_3$}
\label{v3}
\end{figure}

\begin{lem}\label{l2} 
An oriented cycle $\vec C$ can be colored by a homomorphism into ${\vec C}_3$ if and only if $\lambda(\vec C)=0~(\bmod~3)$.
\end{lem}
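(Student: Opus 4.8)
The plan is to exploit the algebraic structure of $\vec C_3$. Label its three vertices by $0,1,2$ so that its arc set is exactly $\{(0,1),(1,2),(2,0)\}$; equivalently, there is an arc from $i$ to $j$ in $\vec C_3$ if and only if $j\equiv i+1\pmod 3$. The first observation is that a map $\phi$ from the vertex set of $\vec C$ to $\{0,1,2\}$ is a homomorphism $\vec C\to\vec C_3$ if and only if
\[
\phi(v_{i+1})\equiv\phi(v_i)+\lambda(v_i,v_{i+1})\pmod 3,\qquad i=0,\dots,n-1.
\]
Indeed, a forward arc $(v_i,v_{i+1})$ must map to an arc of $\vec C_3$, which forces $\phi(v_{i+1})\equiv\phi(v_i)+1$, while a backward arc $(v_{i+1},v_i)$ forces $\phi(v_i)\equiv\phi(v_{i+1})+1$, i.e.\ $\phi(v_{i+1})\equiv\phi(v_i)-1$; both cases are subsumed by the displayed congruence since $\lambda(v_i,v_{i+1})=\pm1$. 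I would also note in passing that such a $\phi$ is automatically an oriented coloring: consecutive vertices get distinct colors because $\vec C_3$ has no loops, and the forbidden-pattern condition of the definition holds because $\vec C_3$ itself contains no pair of opposite arcs.

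For the \emph{only if} direction I would take an arbitrary homomorphism $\phi:\vec C\to\vec C_3$ and sum the congruences above over $i=0,\dots,n-1$. The left-hand sides telescope to $\phi(v_n)-\phi(v_0)=0$ (because $v_n=v_0$), and the right-hand sides sum to $\sum_{i=0}^{n-1}\lambda(v_i,v_{i+1})=\lambda(\vec C)$. Hence $\lambda(\vec C)\equiv0\pmod 3$.

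For the \emph{if} direction, assume $\lambda(\vec C)\equiv0\pmod 3$ and build $\phi$ by walking around the cycle: set $\phi(v_0)=0$ and, for $i=0,1,\dots,n-1$, define $\phi(v_{i+1})\equiv\phi(v_i)+\lambda(v_i,v_{i+1})\pmod 3$. By construction every arc of $\vec C$ is mapped to an arc of $\vec C_3$, so the only remaining point is that this prescription is consistent when the walk returns to its start, i.e.\ that the value it forces on $v_n$ equals $\phi(v_0)$. That forced value is $\phi(v_0)+\sum_{i=0}^{n-1}\lambda(v_i,v_{i+1})=\lambda(\vec C)\equiv0\equiv\phi(v_0)\pmod 3$, so there is no conflict and $\phi$ is the required homomorphism.

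There is no genuine obstacle here: the entire content is the single closing-up consistency check in the sufficiency direction, which is precisely what the hypothesis $\lambda(\vec C)\equiv0\pmod 3$ supplies. The two things one must be careful not to omit are (a) unifying the forward and backward arc cases into the one congruence via the function $\lambda$, and (b) remarking that a homomorphism into $\vec C_3$ really does certify a proper oriented $3$-coloring, so that the lemma can be used as intended later on.
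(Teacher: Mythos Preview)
Your proof is correct and follows essentially the same route as the paper: both arguments identify the vertices of $\vec C_3$ with $\mathbb{Z}/3\mathbb{Z}$, characterize a homomorphism by the congruence $\phi(v_{i+1})\equiv\phi(v_i)+\lambda(v_i,v_{i+1})\pmod 3$, and then define $\phi$ via partial sums of the $\lambda$-values, with the closing-up condition supplying both directions of the equivalence. Your write-up is slightly more explicit about why the congruence captures both forward and backward arcs and about why such a $\phi$ is an oriented coloring, but the underlying argument is identical.
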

\begin{proof}
Suppose that the cycle $\vec C=v_0,v_1,\dots,v_{n-1}$ and $\lambda(\vec C)=0~(\bmod~3)$.
Consider now the function $h:\vec C\to {\vec C}_3$ defined by
$h(v_0)=0$ and
for every $1\le k\le n-1$,
$$h(v_k)=\sum_{i=0}^{k-1}\lambda(v_i,v_{i+1})~(\bmod~3)$$
It is easy to see that $h$ is a coloring homomorphism.

Suppose now that there is a homomorphism $h:\vec C\to {\vec C}_3$.
Then it is easy to see that, 
$h(v_1)=h(v_0)+\lambda(v_0,v_{1})~(\bmod~3)$
and
for every $1\le k\le n-1$,
$$h(v_k)=h(v_0)+\sum_{i=0}^{k-1}\lambda(v_i,v_{i+1})~(\bmod~3).$$
Finally
$$h(v_0)=h(v_0)+\sum_{i=0}^{n-1}\lambda(v_i,v_{i+1})~(\bmod~3).$$
Thus,
$$\lambda(\vec C)=\sum_{i=0}^{n-1}\lambda(v_i,v_{i+1})=0~(\bmod~3).$$
\end{proof}
\begin{lem}
An oriented cycle $\vec C$ can be colored by a homomorphism into ${\vec V}_3$ if and only if $\vec C$
does not contain three consecutive arcs going in the same direction ($\to\to\to$ or $\gets\gets\gets$).
\end{lem}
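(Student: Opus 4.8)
The plan is to exploit the fact that $\vec V_3$, being the oriented triangle that is not $\vec C_3$, is the transitive triangle: we may take $V(\vec V_3)=\{0,1,2\}$ with $A(\vec V_3)=\{(0,1),(1,2),(0,2)\}$. With this model, a map $h\colon V(\vec C)\to\{0,1,2\}$ is a homomorphism $\vec C\to\vec V_3$ if and only if $h$ strictly increases along each forward arc of $\vec C$ and strictly decreases along each backward arc. I would record this reformulation first, since both directions of the lemma follow from it almost immediately.

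For the ``only if'' direction I would argue contrapositively. Suppose $\vec C$ contains three consecutive arcs in the same direction, say forward arcs joining $v_i,v_{i+1},v_{i+2},v_{i+3}$ (the backward case is symmetric). Then any homomorphism $h$ would have to satisfy $h(v_i)<h(v_{i+1})<h(v_{i+2})<h(v_{i+3})$, i.e.\ produce four distinct values inside the three-element set $\{0,1,2\}$ --- impossible. Hence no homomorphism into $\vec V_3$ exists.

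For the ``if'' direction, assume $\vec C$ has no three consecutive arcs going the same way. First note that then $\vec C$ is neither all-forward nor all-backward (an $n$-cycle with $n\ge 3$ of a single type already contains $\to\to\to$ or $\gets\gets\gets$), so the pattern of arc directions around $\vec C$ is not constant. I would then colour each vertex by the ``local shape'' of the two arcs meeting there: assign $0$ to a vertex both of whose incident arcs leave it (a \emph{source}), $2$ to a vertex both of whose incident arcs enter it (a \emph{sink}), and $1$ to every other vertex (one arc in, one out). Since the direction pattern is non-constant, sources and sinks occur and this $h$ is well defined. It remains to check that $h$ increases along forward arcs and decreases along backward arcs. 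Consider a forward arc $(v_i,v_{i+1})$: then $v_i$ is not a sink and $v_{i+1}$ is not a source, so $h(v_i)\in\{0,1\}$ and $h(v_{i+1})\in\{1,2\}$, and the only way strict increase could fail is $h(v_i)=h(v_{i+1})=1$. I would rule this out: $h(v_i)=1$ forces the other arc at $v_i$ to point into $v_i$, hence to be the forward arc $(v_{i-1},v_i)$; likewise $h(v_{i+1})=1$ forces the forward arc $(v_{i+1},v_{i+2})$; so $(v_{i-1},v_i),(v_i,v_{i+1}),(v_{i+1},v_{i+2})$ would be three consecutive forward arcs, contradicting the hypothesis. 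The backward-arc case is the mirror image, and together these show $h$ is a homomorphism.

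I do not expect a genuine obstacle here. The only subtle point --- and the sole place where the hypothesis is used --- is excluding the configuration $h(v_i)=h(v_{i+1})=1$ on a run of equally directed arcs, which is exactly the translation of ``no three arcs in a row'' into the language of the colouring.
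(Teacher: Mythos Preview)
Your proof is correct and follows essentially the same approach as the paper: your ``local shape'' colouring (source $\mapsto 0$, sink $\mapsto 2$, otherwise $\mapsto 1$) is precisely the map $h(v)=\mathrm{indegree}(v)$ that the paper uses, and your case analysis on a forward arc $(v_i,v_{i+1})$ mirrors the paper's verification. Your treatment of the ``only if'' direction is somewhat more explicit than the paper's, which simply declares it easy, but the idea is the same.
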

\begin{proof}
It is easy to see that if $\vec C$ can be colored by a homomorphism into ${\vec V}_3$,
then $\vec C$ cannot contain three consecutive arcs going in the same direction.

On the other hand if $\vec C$
does not contain three consecutive arcs going in the same direction, then define the 
function $h:\vec C\to {\vec V}_3$ to be $h(v)=indegree(v)$.
To see that $h$ is a homomorphism, consider two neighbors $v_i$ and $v_{i+1}$, and suppose that 
there is arc  from $v_i$ to $v_{i+1}$.
Then:
\begin{itemize}
\item $h(v_i)\ne 2$, because $indegree(v_i)\le 1$, 
\item$h(v_{i+1})\ne 0$,  because $indegree(v_{i+1})\ge 1$,
\item $h(v_i)=h(v_{i+1})=1$ is not possible, because otherwise three arcs
$(v_{i-1},v_i)$, 
$(v_{i},v_{i+1})$, 
$(v_{i+1},v_{i+2})$, will have the same direction.
\end{itemize}
\end{proof}

Thus we have the following Lemma.

\begin{lem}
An oriented cycle $\vec C$ can be colored with three colors if and only if 
$\lambda(\vec C)=0~(\bmod~3)$ or $\vec C$
does not contain three consecutive arcs going in the same direction ($\to\to\to$ or $\gets\gets\gets$).
\end{lem}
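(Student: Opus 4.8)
The plan is to obtain this statement essentially for free by combining the two preceding lemmas with the observation, recorded just before Lemma~\ref{l2}, that a $3$-coloring of $\vec C$ is nothing but a homomorphism of $\vec C$ into an oriented graph $H$ on at most three vertices, and that every such $H$ maps homomorphically into $\vec C_3$ or into $\vec V_3$.

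For the ``if'' direction I would argue directly. If $\lambda(\vec C)=0~(\bmod~3)$, Lemma~\ref{l2} supplies a homomorphism $\vec C\to\vec C_3$, and reading the three vertices of $\vec C_3$ as three colors turns this into an oriented $3$-coloring of $\vec C$. If instead $\vec C$ contains no three consecutive arcs in the same direction, the lemma on $\vec V_3$ supplies a homomorphism $\vec C\to\vec V_3$, and again this is an oriented $3$-coloring. So either hypothesis already gives a $3$-coloring.

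For the ``only if'' direction, suppose $\vec C$ is colored with three colors. I would first note that the color classes, with the arcs induced by $\vec C$, form an oriented graph $H$: there are no loops because adjacent vertices receive different colors, and no digons because condition~(2) forbids an arc from color $\beta_2$ to color $\beta_1$ once there is one from $\beta_1$ to $\beta_2$. Thus $H$ has at most three vertices and $\vec C$ maps onto $H$. Next I would check, by a short case analysis, that every oriented graph on at most three vertices embeds into $\vec C_3$ or into $\vec V_3$: the only tournaments on three vertices are $\vec C_3$ and the transitive tournament, and the latter is precisely $\vec V_3$ (vertices labelled $0,1,2$ by indegree, arcs $0\to1$, $1\to2$, $0\to2$); every oriented graph on three vertices with at most two arcs --- the directed path, the out-star, the in-star, and anything smaller --- is a subgraph of the transitive tournament; so $H$ sits inside $\vec C_3$ or inside $\vec V_3$. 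Composing $\vec C\to H$ with this inclusion yields a homomorphism $\vec C\to\vec C_3$ or $\vec C\to\vec V_3$, and then Lemma~\ref{l2} gives $\lambda(\vec C)=0~(\bmod~3)$ in the first case, while the lemma on $\vec V_3$ gives the absence of three consecutive arcs in the same direction in the second case. That is the required disjunction.

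The only step with any content beyond citing the previous two lemmas is the classification of three-vertex oriented graphs, and I expect this to be the (very mild) main obstacle: one must verify that $\vec C_3$ and $\vec V_3$ are the only two tournaments on three vertices and that every graph with fewer arcs embeds into the transitive tournament. Everything else is direct substitution into results already proved.
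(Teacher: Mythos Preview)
Your proposal is correct and follows essentially the same route as the paper: the paper's own proof of this lemma is the single line ``Thus we have the following Lemma,'' relying on the two preceding lemmas together with the observation at the start of Section~\ref{s3} that any $3$-coloring of $\vec C$ factors through $\vec C_3$ or $\vec V_3$. Your write-up simply makes explicit the case analysis behind that observation (that every oriented graph on at most three vertices embeds in $\vec C_3$ or in $\vec V_3$), which the paper asserts without argument.
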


\section{Coloring with four colors}\label{s4}

The following lemma is in~\cite{sop2} without any proof.

\begin{lem}
Every oriented cycle $\vec C$ except ${\vec C}_5$ can be colored with 4 colors.
\end{lem}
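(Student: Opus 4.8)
The plan is to find, for every oriented cycle $\vec C\ne\vec C_5$, a coloring homomorphism into a fixed tournament on four vertices. A natural target is the tournament $T_4$ obtained by adding to the directed triangle $\vec C_3$ on $\{0,1,2\}$ a fourth vertex $3$ with arcs chosen so that $T_4$ is vertex-transitive-like enough to absorb both "long runs" and "short runs" of arcs. Concretely I would try $T_4$ with arcs $0\to1\to2\to0$ (the $\vec C_3$) together with $3\to0$, $1\to3$, and $3\to2$ (so $3$ has in-neighbour $1$ and out-neighbours $0,2$). The point of the extra vertex is that it gives an alternative way to "turn around": inside $\vec C_3$ the color advances by $+1$ along a forward arc, but passing through $3$ lets us break the rigid $\bmod 3$ accounting that forced $\lambda(\vec C)\equiv0$ in Lemma~\ref{l2}.

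The first step is the easy case: if $\vec C$ does not contain three consecutive arcs in the same direction, then by the previous lemma it already maps into $\vec V_3$, hence is colorable with three colors, hence with four. So from now on $\vec C$ contains a run $\to\to\to$ (or $\gets\gets\gets$, symmetrically). The second step is to reduce the general case to this maximal-run situation and to handle $\lambda(\vec C)\equiv0\pmod3$ directly via Lemma~\ref{l2} again. Thus the real work is the remaining case: $\vec C$ contains a run of three equal arcs and $\lambda(\vec C)\not\equiv0\pmod3$.

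For that case the plan is to walk around the cycle assigning colors greedily, starting in $\vec C_3$ (color $v_0=0$, and advance by $\pm1\bmod3$ along forward/backward arcs as in Lemma~\ref{l2}), but whenever we meet a block of three consecutive arcs going the same direction we route the middle vertex (or one of the interior vertices) of that block through the vertex $3$ of $T_4$, which changes the net color shift contributed by that block by a controlled amount (not $0\bmod3$). Since $\lambda(\vec C)\not\equiv0\pmod3$ there is a nonzero "defect" to kill, and the claim is that a single such detour through $3$, placed inside one run of length $\ge3$, adjusts the defect by exactly the right residue so that the walk closes up consistently when we return to $v_0$. One has to check the local homomorphism condition at the two or three vertices adjacent to the detour — this is a finite check using the specific arc set of $T_4$ — and check that the run really does have enough interior room (length $\ge3$ guarantees a safe place to insert the detour). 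The main obstacle, and the part needing real care, is precisely this bookkeeping: verifying that the detour through $3$ shifts the closing residue by the needed value in \emph{all} sub-cases (run of forward vs. backward arcs, defect $\equiv1$ vs. $\equiv2\bmod3$), possibly using more than one detour or a longer detour if a single pass through $3$ is not enough, and confirming that $\vec C_5$ is the unique cycle for which no such adjustment exists (there the only run structure is a single run of length $5$ wrapping the whole cycle, and the rigidity of $T_4$ on that configuration must be shown to force $5$ colors).
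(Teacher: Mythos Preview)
Your plan takes a genuinely different route from the paper, and the paper's is dramatically simpler. The paper does not use a tournament target or any residue-adjustment bookkeeping at all. Instead it splits into two cases. If $\vec C$ is not a directed cycle, then it has a vertex of outdegree $0$; put that vertex last, color it with the fourth color $3$, and color the remaining \emph{path} $v_0,\dots,v_{n-2}$ into $\vec C_3$ exactly as in Lemma~\ref{l2} --- for a path there is no closing condition, so the $\lambda\equiv 0\pmod 3$ obstruction vanishes. The two arcs incident to $v_{n-1}$ both point into color $3$, so the target is just $\vec C_3$ together with a universal sink. If $\vec C=\vec C_n$ is a directed cycle with $n\ge 6$, write $n=3a+4b$ with $a,b\ge 0$ and walk around the $4$-vertex graph with arcs $0\to1\to2\to3\to0$ and $2\to0$, using $a$ laps of length $3$ and $b$ laps of length $4$. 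That's the whole proof.

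Your plan, by contrast, has a real gap in the case you yourself flag as the hard one. Take the cycle with arc pattern $\to\to\to\gets\to\gets$: it has $\lambda=2\not\equiv 0\pmod 3$, and its only run of length $\ge 3$ is a single forward run of length exactly $3$. In your $T_4$ the unique in-neighbour of $3$ is $1$, so a forward ``detour'' $1\to 3\to 2$ requires a vertex colored $1$ in the interior of the run; there is room for at most one such detour, and one detour shifts the closing residue by $-1$, not the $-2$ you need. Two detours do not fit, and there is no backward run to exploit. (The cycle \emph{does} admit a homomorphism to your $T_4$, e.g.\ $1,3,2,0,3,2$, but your greedy walk starting at color $0$ with detours confined to long runs does not find it.) So the ``finite check'' you defer is exactly where the argument breaks, and patching it would require a substantially different coloring rule. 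The paper's sink trick sidesteps all of this: once the cycle is not fully directed, you never have to balance any residues.
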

\begin{proof}
Every cycle with 3 or 4 vertices can be colored with 4 colors.

Suppose first that $\vec C$ of length $n\ge 6$ has all $n$ arcs going in one direction.
Then there are two nonnegative integers $a$, $b$, such that $n=3a+4b$, and 
$\vec C$ can be colored by a homomorphism into the graph with arcs
$0\to 1\to 2\to3 \to 0$ and $2\to 0$.

If $\vec C$ has arcs in opposite directions, then it has a vertex, say $v_{n-1}$, of outdegree 0.
In this case color $v_{n-1}$ with color 3 and the path $v_0, v_1,\dots, v_{n-2}$ with colors
$0,1,2$ as in Lemma~\ref{l2}
\end{proof}

\end{document}